\newtheorem{theorem}{Theorem}
\newtheorem{proposition}{Proposition}
\newtheorem{lemma}{Lemma}
\newproof{proof}{Proof}
\newcommand{\ls}[1]
    {\dimen0=\fontdimen6\the\font
     \lineskip=#1\dimen0
     \advance\lineskip.5\fontdimen5\the\font
     \advance\lineskip-\dimen0
     \lineskiplimit=.9\lineskip
     \baselineskip=\lineskip
     \advance\baselineskip\dimen0
     \normallineskip\lineskip
     \normallineskiplimit\lineskiplimit
     \normalbaselineskip\baselineskip
     \ignorespaces}
\let\oldmarginpar\marginpar
\renewcommand\marginpar[1]{ {\ls{1}
\-\oldmarginpar[\raggedleft\footnotesize\textcolor{red}{#1}]%
{\raggedright\footnotesize\textcolor{red}{#1}}
}}
\begin{document}

\begin{frontmatter}

\title{All graphs with at most seven vertices are Pairwise Compatibility Graphs}

\author[rvt]{T. Calamoneri\corref{cor1}\fnref{fn1}}
\ead{calamo@di.uniroma1.it}

\author[rvt]{D. Frascaria\corref{cor1}\fnref{fn1}}
\ead{dariio@msn.com}

\author[rvt]{B. Sinaimeri\corref{cor1}\fnref{fn1}}
\ead{sinaimeri@di.uniroma1.it}

\address[rvt]{Department of Computer Science, ``Sapienza'' University of Rome, Via Salaria 113, 00198 Roma, Italy}


\begin{abstract}
A graph $G$ is called a pairwise compatibility graph (PCG) if there exists an edge-weighted tree $T$ and two non-negative real numbers $d_{min}$ and $d_{max}$ such that each leaf $l_u$ of $T$ corresponds to a vertex $u \in V$ and there is an edge $(u,v) \in E$ if and only if $d_{min} \leq d_{T,w} (l_u, l_v) \leq d_{max}$ where $d_{T,w} (l_u, l_v)$ is the sum of the weights of the edges on the unique path from $l_u$ to $l_v$ in $T$.  

In this note, we show that  all the graphs with at most seven vertices are PCGs.  In particular all these graphs  exept for the wheel on $7$ vertices $W_7$ are PCGs of a particular structure of a tree: a centipede. 
\end{abstract}

\begin{keyword} 
Pairwise Comparability Graphs,
Caterpillar,
Centipede,
Wheel.
\end{keyword}

\end{frontmatter}
\section{Introduction}

A graph $G=(V,E)$ is a {\em pairwise compatibility graph} (PCG) if there exists a tree $T$, an edge-weight function  $w$ that assigns positive values to the edges of $T$ and  two non-negative real numbers $d_{min}$ and $d_{max}$, with $d_{min} \leq d_{max}$, such that each vertex $u \in V$ is uniquely associated to a leaf $l_u$ of $T$ and there is an edge $(u,v) \in E$ if and only if $d_{min} \leq d_{T,w} (l_u, l_v) \leq d_{max}$ where $d_{T,w} (l_u, l_v)$ is the sum of the weights of the edges on the unique path from $l_u$ to $l_v$ in $T$. In such a case, we say that $G$ is a PCG of $T$ for $d_{min}$ and $d_{max}$; in symbols, $G=PCG(T, w, d_{min}, d_{max})$.

It is clear that if a tree $T$, an edge-weight function $w$ and two values $d_{min}$ and $d_{max}$ are given, the construction of a PCG is a trivial problem. 
We focus on the reverse of this problem, i.e., given a graph $G$ we have to find out a tree $T$, an edge-weight function $w$ and suitable values, $d_{min}$ and $d_{max}$, such that $G=PCG(T,w,d_{min}, d_{max})$. Such a problem is called the {\em pairwise compatibility tree construction problem}.  

The concept of pairwise compatibility was introduced in \cite{Kal03} in a computational biology context and the weight function $w$ has positive values, as it represents a not null distance.  There are several known specific graph classes of pairwise compatibility graphs, e.g., cliques and disjoint union of cliques \cite{B}, chordless cycles and single chord cycles \cite{YHR09}, some  particular subclasses of bipartite graphs \cite{YBR10}, some particular subclasses of split matrogenic graphs \cite{CPS12}.  Furthermore a lot of work has been done concerning some particular subclasses of PCG as leaf power graphs \cite{B}, exact leaf power graphs \cite{BLR10} and lately a new subclass has been introduced, namly the min-leaf power graphs \cite{CPS12}.

Initially, the authors of \cite{Kal03} conjectured that every graph is a PCG, but this conjecture has been confuted in \cite{YBR10}, where a particular bipartite graph with 15 nodes has been proved not to be a PCG. 
This latter result has given rise to this research as it is natural to ask for the smallest graph that is not a PCG.

\medskip

A {\em caterpillar} $\Gamma_n$ is an $n$-leaf tree for which any leaf is at a distance exactly one from a central path called {\em spine}.
A {\em centipede} is an $n$-leaf caterpillar, in which the edges incident to the leaves produce a perfect matching.
Deleting from an $n$-leaf centipede the degree two vertices and merging the two edges incident to each of these vertices into a unique edge, results in a new caterpillar that we will call {\em reduced centipede} and denote by $\Pi_n$  (as an example, $\Pi_5$ is depicted at the top left of  Fig. \ref{fig.5nodes}).

Caterpillars are interesting trees in the context of PCGs, as in most of the cases, the pairwise compatibility tree construction problem admits as solution a tree that is in fact a caterpillar.
For this reason, we focus on this special kind of tree.
In this note, we prove that all the graphs with at most seven vertices are PCGs. More precisely, we demonstrate the following results:

\begin{itemize}
\item
If $G=PCG(\Gamma_n,w,d_{min}, d_{max})$, then there always exist a new edge-weight function $w'$, and a new value $d'_{max}$ such that it also holds: $G=PCG(\Pi_n,w',d_{min}, d'_{max})$. 

\item
It is well known that graphs with five vertices or less are all PCGs and the witness trees -- not all caterpillars -- are shown in \cite{P02}.
For each one of these graphs we prove that it is PCG of a reduced centipede, providing accordingly, an edge-weight function $w$ and the two values $d_{min}$ and $d_{max}$.

\item
All the graphs with six and seven vertices, except for the wheel $W_7$ (i.e. the graph formed by connecting a single vertex to all vertices of a cycle of length six -- see Figure \ref{fig.wheel}.a), are PCGs of a reduced centipede and, for each of them, we provide the edge-weight function $w$ and the two values $d_{min}$ and $d_{max}$ such that it is $PCG(\Pi_n, w, d_{min}, d_{max})$, $n=6, 7$.

\item
For what concerns the wheel $W_7$, it is known \cite{CFS} that $W_7$ is not PCG of the reduced centipede $\Pi_7$ (and hence it is not PCG of a caterpillar). 
We show that $W_7$ is PCG of a tree different from a caterpillar.
\end{itemize}

\section{Preliminaries}

In this section we list some results that will turn out to be useful in the rest of the paper.

Let $T$ be a tree such that there exist an edge-weight function $w$ and two non-negative values $d_{min}$ and $d_{max}$ such that $G=PCG(T,w,d_{min}, d_{max})$.
Observe that if $T$ has at least $4$ vertices and contains a vertex $v$ of degree $2$, then we can construct a new tree $T'$ in which $v$ is eliminated, the two edges $(x,v)$ and $(v,y)$ incident to $v$ are merged into a unique edge $(x,y)$ and a new function $w'$ is defined from $w$ only modifying the weight of the new edge, that is set equal to the sum of the weights of the old edges: $w'((x,y))=w((x,v))+w((v,y))$.
It is easy to see that $G=PCG(T',w',d_{min}, d_{max})$. For this reason, from now on, we will assume that all the trees we handle do not contain vertices of degree two. 

\begin{proposition} \cite{CMPS}
\label{prop.integer}
Let $G=PCG(T,w, d_{min},d_{max})$, where $d_{min}, d_{max}$ and 
the weight $w(e)$ of each edge $e$ of $T$ are positive
real numbers.
Then it is possible to choose $\hat{w},\hat{d}_{min},\hat{d}_{max}$ such that for any $e$, the quantities $\hat{d}_{min},\hat{d}_{max}$ and $\hat{w}(e)$  are natural numbers and $G=PCG(T,\hat{w},\hat{d}_{min},\hat{d}_{max})$.
\end{proposition}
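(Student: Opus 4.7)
The plan is to exploit the fact that the PCG property is expressed by only finitely many inequalities among the pairwise leaf-distances in $T$, and that each such distance is a linear (hence continuous) function of the edge weights. The argument will therefore proceed in three stages: introduce slack so that every inequality becomes strict, approximate all real quantities by rationals using continuity, and then scale to natural numbers.

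First I would turn every inequality strict. Let $D=\{d_{T,w}(l_u,l_v):(u,v)\notin E\}$; by the definition of PCG this is a finite set of reals disjoint from $[d_{min},d_{max}]$, so
\[
\delta \;=\; \tfrac{1}{2}\min\bigl\{\,|x-d_{min}|,\;|x-d_{max}|\,:\, x\in D\,\bigr\}
\]
is strictly positive (if $D=\emptyset$ then $G$ is a clique and the statement is trivial). Replacing $d_{min}$ with $d_{min}-\delta$ and $d_{max}$ with $d_{max}+\delta$ still gives $G=PCG(T,w,d_{min}-\delta,d_{max}+\delta)$, and every one of the finitely many conditions describing edges and non-edges of $G$ is now strict.

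Next I would rationalize and rescale. Because each leaf-distance is a linear combination of the edge weights, every strict inequality obtained in the previous step cuts out an open half-space in the positive orthant of $\mathbb{R}^{|E(T)|+2}$, and their intersection is an open neighbourhood of $(w,d_{min}-\delta,d_{max}+\delta)$. Density of $\mathbb{Q}$ lets me choose positive rationals $\tilde w(e),\tilde d_{min},\tilde d_{max}$ inside this neighbourhood, giving $G=PCG(T,\tilde w,\tilde d_{min},\tilde d_{max})$. Finally, letting $M$ be a common denominator of these rationals and setting $\hat w=M\tilde w$, $\hat d_{min}=M\tilde d_{min}$, $\hat d_{max}=M\tilde d_{max}$ multiplies every leaf-distance by $M$ and hence preserves each biconditional in the definition of a PCG; all three quantities are now positive integers, as required. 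The only point that needs genuine care is the positivity of $\delta$ in the first stage, which however follows immediately from the strict biconditional in the definition of a PCG: no non-edge distance may equal either threshold.
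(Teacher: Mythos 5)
The paper offers no proof of this proposition: it is imported from \cite{CMPS} (listed as submitted), so there is no in-text argument to compare yours against. Judged on its own terms, your three-stage argument (strictify, rationalize, clear denominators) is the standard proof of such a statement and is essentially sound: the PCG condition is a finite conjunction of linear inequalities in the vector $(w,d_{min},d_{max})$; your $\delta$ is positive precisely because no non-edge distance may equal a threshold; widening the window by $\delta$ makes every condition a strict linear inequality, so the solution set is open and contains a rational point near the real one; and since each inequality is homogeneous of degree one in the full vector, multiplying through by a common denominator $M$ preserves all of them while producing integers.

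One detail needs a patch. The perturbed point $(w,\,d_{min}-\delta,\,d_{max}+\delta)$ need not lie in the positive orthant: if every non-edge distance exceeds $d_{max}$, then $\delta$ is unconstrained by $d_{min}$ and $d_{min}-\delta$ may be negative, in which case there are no positive rationals in a small neighbourhood of that coordinate. (When some non-edge distance lies below $d_{min}$ one automatically gets $\delta<d_{min}/2$, so the problem only arises in the other case.) The fix is immediate --- take $\delta'=\min\{\delta,\,d_{min}/2\}$, or note that the open convex solution set of the strict system also contains $(w,\,d_{min}/2,\,d_{max}+\delta)$ and hence meets the positive orthant --- but as written that step is not quite justified. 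Everything else, including the final scaling, is correct.
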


We prove here the following useful lemma:
\begin{lemma}
\label{lemma.1}
Let $G=PCG(T,w, d_{min},d_{max})$. It is possible to choose $\hat{w},\hat{d}_{min},\hat{d}_{max}$ such that $\min \hat{w}(e)=1$, where the minimum is computed on all the edges of $T$, and  $G=PCG(T,\hat{w},\hat{d}_{min},\hat{d}_{max})$.
\end{lemma}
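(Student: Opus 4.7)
The plan is a direct rescaling argument. Since $w$ assigns strictly positive values to the finitely many edges of $T$, the quantity $m := \min_{e \in E(T)} w(e)$ exists and satisfies $m > 0$. I would then define $\hat{w}(e) = w(e)/m$ for every edge $e$ of $T$, together with $\hat{d}_{min} = d_{min}/m$ and $\hat{d}_{max} = d_{max}/m$. By construction $\hat{w}$ still takes positive values, $\hat{d}_{min}$ and $\hat{d}_{max}$ are still non-negative (with $\hat{d}_{min} \leq \hat{d}_{max}$), and $\min_e \hat{w}(e) = 1$, achieving the required normalization.

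The second step is to verify that the resulting PCG is unchanged. Because path weight is linear in edge weights, for any pair of leaves $l_u, l_v$ one has $d_{T,\hat{w}}(l_u, l_v) = d_{T,w}(l_u, l_v)/m$. Dividing the defining inequality $d_{min} \leq d_{T,w}(l_u, l_v) \leq d_{max}$ through by the positive constant $m$ gives the equivalent inequality $\hat{d}_{min} \leq d_{T,\hat{w}}(l_u, l_v) \leq \hat{d}_{max}$, so the two parameter sets produce exactly the same edge set of $G$, yielding $G = PCG(T,\hat{w},\hat{d}_{min},\hat{d}_{max})$.

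There is no real obstacle here: the statement is essentially the observation that the PCG condition is scale-invariant. If a version with \emph{integer} parameters and minimum weight $1$ were desired, one would first apply Proposition \ref{prop.integer} to clear denominators and then rescale; but as stated the lemma does not require integrality, so the single-step rescaling above suffices.
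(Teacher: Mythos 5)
Your rescaling argument is correct and proves the lemma as stated: dividing all weights and both thresholds by $m=\min_e w(e)>0$ preserves every path-distance comparison and normalizes the minimum weight to $1$. However, it is a genuinely different route from the paper's. The paper first invokes Proposition~\ref{prop.integer} to make all weights and thresholds integers, and then performs an \emph{additive} shift: it subtracts $w(e_1)-1$ from every edge incident to a leaf (where $e_1$ is the lightest such edge), so that every leaf-to-leaf distance drops by exactly $2w(e_1)-2$, and it compensates by setting $\hat{d}_{max}=d_{max}-2w(e_1)+2$ and $\hat{d}_{min}=\max\{d_{min}-2w(e_1)+2,\,0\}$. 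Your multiplicative version is simpler and needs no integrality hypothesis, which is a genuine advantage in elegance.

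What the paper's heavier approach buys is the \emph{combination} of properties it relies on immediately afterwards: the standing assumption for the rest of the paper is that the weights and $d_{min},d_{max}$ are integers \emph{and} the smallest weight is $1$ simultaneously. The additive shift preserves integrality, so both properties hold at once. Your division by $m$ generally destroys integrality (e.g.\ weights $2$ and $3$ become $1$ and $3/2$), and your proposed fix --- apply Proposition~\ref{prop.integer} first, then rescale --- does not repair this, since the final rescaling reintroduces non-integer weights; re-applying Proposition~\ref{prop.integer} afterwards would scale the minimum weight away from $1$ again. So while your proof is a valid and cleaner proof of the lemma exactly as stated, it would not by itself justify the paper's subsequent working assumption; for that, an integrality-preserving normalization such as the paper's subtraction argument is needed.
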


\begin{proof}
According to Proposition \ref{prop.integer}, we can assume that the edge weight $w$ and the two values $d_{min},d_{max}$ are integers.
Let $e_1, \ldots , e_n$ be the edges of $T$ incident to the leaves.
Without loss of generality, we can assume $w(e_1)=\min_i w(e_i)$.

We define $\hat{w}$ as follows:
$\hat{w}(e_1)=1$ and for each $i=2, \ldots , n$ define $\hat{w}(e_i)=w(e_i)-w(e_1)+1$.
Clearly, the function $\hat{w}$ is well defined as all its values are positive.

As the weight of any edge incident to a leaf has been decreased by exactly $w_1-1$ and the rest of the weights remained unchanged, then for of any two leaves $l_i,l_j$ it holds that  $d_{T, \hat{w}}(l_i,l_j)=d_{T, w}(l_i,l_j)-2w(e_1)+2$. 
Let $\hat{d}_{min}=\mbox{max}\{d_{min}-2w(e_1)+2, 0 \}$ and $\hat{d}_{max}=d_{max}-2w(e_1)+2$. 
It is easy to see that $G=PCG(T,\hat{w},\hat{d}_{min},\hat{d}_{max})$ indeed, if $\hat{d}_{min}=0$ then it means that there was no path weight below $d_{min}$, with respect to $w$. \qed
\end{proof}

The previous results imply that it is not restrictive to assume that the weights and $d_{min}$ and $d_{max}$ are integers and that the smallest weight is $1$.  Thus, in the rest of the paper we  will use these assumptions.

\section{PCGs of Caterpillars}

In this section we will prove that we can get rid of different kinds of caterpillar structures and restrict to consider only reduced centipedes. 

\begin{theorem}
\label{th.caterpillar}
Let $G$ be an $n$ vertex graph, $\Gamma_n$ and $\Pi_n$ be an $n$-leaf caterpillar without degree 2 vertices and an $n$-leaf reduced centipede, respectively.

Let $G=PCG(\Gamma_n,w,d_{min}, d_{max})$.
It is possible to choose $w'$ and  $d'_{max}$ such that $G=PCG(\Pi_n,w',d_{min}, d'_{max})$. 
\end{theorem}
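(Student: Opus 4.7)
The plan is to obtain $\Pi_n$ from $\Gamma_n$ by ``expanding'' each spine vertex of $\Gamma_n$ into a short sub-path, redistributing its leaves along that sub-path, and then assigning the new internal edges a tiny weight $\epsilon > 0$ while every previously existing edge keeps its $\Gamma_n$-weight. Leaf-to-leaf distances in $\Pi_n$ will then exceed those in $\Gamma_n$ only by a bounded multiple of $\epsilon$, so taking $\epsilon$ small enough preserves the PCG structure without touching $d_{min}$.

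Write the spine of $\Gamma_n$ as $s_1,\dots,s_k$ with $a_i$ leaves at $s_i$; since $\Gamma_n$ has no degree-$2$ vertex, $a_1,a_k\ge 2$, $a_i\ge 1$ for $1<i<k$, and $\sum_i a_i = n$. I assume $k\ge 2$ (the degenerate case $k=1$, where $\Gamma_n$ is a star, is handled by the same construction with $s_1$ playing both endpoint roles). Replace each $s_i$ by a sub-path $P_i$: for interior $i$ let $|P_i|=a_i$ with one leaf of $s_i$ attached at each vertex of $P_i$; for $i\in\{1,k\}$ let $|P_i|=a_i-1$, placing two of the leaves of $s_i$ at the outer vertex of $P_i$ and one at each of the remaining vertices. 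Link consecutive $P_i$ and $P_{i+1}$ by the original spine edge $(s_i,s_{i+1})$ of $\Gamma_n$. Then $\sum_i|P_i|=n-2$, and the resulting caterpillar has two leaves at each endpoint of its spine and exactly one leaf at every interior spine vertex, hence it is (isomorphic to) $\Pi_n$.

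For the weight function, set $w'(e)=w(e)$ on every leaf edge and on every spine edge inherited from $\Gamma_n$, and $w'(e)=\epsilon$ on each of the new spine edges internal to some $P_i$. The unique $l,l'$-path in $\Pi_n$ consists of the corresponding $\Gamma_n$-path plus a bounded number of new internal edges, so
\[
d_{\Pi_n,w'}(l,l') \;=\; d_{\Gamma_n,w}(l,l') + c(l,l')\,\epsilon,\qquad c(l,l')\in\{0,1,\dots,n-3\}.
\]
Let $g_1$ be the minimum of $d_{min}-d_{\Gamma_n,w}(l,l')$ over leaf pairs with $d_{\Gamma_n,w}(l,l')<d_{min}$, and $g_2$ the minimum of $d_{\Gamma_n,w}(l,l')-d_{max}$ over leaf pairs with $d_{\Gamma_n,w}(l,l')>d_{max}$, with empty minima taken to be $+\infty$; both are strictly positive. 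Pick any $0<\epsilon<\min(g_1,g_2)/n$ and set $d'_{max}=d_{max}+n\epsilon$. A three-case split on the position of $d_{\Gamma_n,w}(l,l')$ relative to $[d_{min},d_{max}]$ then shows $d_{\Pi_n,w'}(l,l')\in[d_{min},d'_{max}]$ iff $d_{\Gamma_n,w}(l,l')\in[d_{min},d_{max}]$, so $G=PCG(\Pi_n,w',d_{min},d'_{max})$.

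The only delicate point is the requirement that $d_{min}$ be preserved, which rules out any uniform rescaling of weights; in particular Proposition~\ref{prop.integer} and Lemma~\ref{lemma.1} are not invoked, and I work directly with the given positive real weights. The point is that the perturbations $c(l,l')\epsilon$ only \emph{increase} distances and are bounded by $n\epsilon$: choosing $\epsilon<g_1/n$ prevents any pair originally below $d_{min}$ from being lifted to it, while the growth at the upper end is absorbed into $d'_{max}$.
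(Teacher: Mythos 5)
Your proposal is correct and follows essentially the same route as the paper: both map the leaf edges of $\Gamma_n$ onto those of $\Pi_n$, give the spine edges of $\Pi_n$ that correspond to a single spine vertex of $\Gamma_n$ a tiny weight $\epsilon$ (the paper first sets them to $0$ and then perturbs), bound the resulting increase in leaf-to-leaf distances by the minimal gap between the out-of-interval distances and $\{d_{min},d_{max}\}$, and absorb the increase into $d'_{max}$ while leaving $d_{min}$ untouched. The only differences are cosmetic: your explicit two-sided gaps $g_1,g_2$ and threshold $\min(g_1,g_2)/n$ play the role of the paper's $L$ and $\epsilon=L/(N+1)$.
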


\begin{proof}
In order not to overburden the exposition, let $\Gamma = \Gamma_n$ and $\Pi = \Pi_n$.

If $\Gamma$ is a reduced centipede, the claim is trivially proved, so assume it is not.
We lead the proof into two steps. 
First we define a non-negative edge-weight function $w''$ proving that $\Gamma$ weighted by $w$ and $\Pi$ weighted by $w''$ generate the same PCG $G$ for the same values $d_{min}$ and  $d_{max}$.
Then we modify $w''$ into a positive weight function $w'$ and introduce two new values $d'_{min}$ and $d'_{max}$ proving that $G$ is also $PCG(\Pi,w',d'_{min}, d'_{max})$.

Draw $\Gamma$ so that: i) the spine lies on a horizontal line, ii) all the leaves lie on a parallel line and iii) the edges between the spine and the leaves are represented as non-crossing line segments; number the leaves and the vertices of the spine from left to right $l_1, \ldots , l_n$ and $s_1, \ldots s_k$, $k <n$, respectively.
By drawing the reduced centipede $\Pi$ in a similar way, we number the leaves and the vertices of the spine from left to right by $m_1, \ldots , m_n$ and $t_2, \ldots t_{n-1}$.

We define the edge-weight function $w''$ as follows:

\begin{itemize}
\item
let $p(l_i)$ the unique adjacent vertex of $l_i$ in $\Gamma$; for each $1 < i < n$, define $w''((m_i, t_i))=w((l_i, p(l_i)))$;
\item
define $w''((m_1, t_2))=w((l_1, p(l_1)))$ and $w''((m_n, t_{n-1}))=w((l_n, p(l_n)))$;
\item
for each $2\leq i \leq n-2$, define $w''((t_i, t_{i+1}))=0$ if and only if $p(l_i)=p(l_{i+1})$ in $\Gamma$;
\item
for each $2\leq i \leq n-2$, define $w''((t_i, t_{i+1}))=w((p(t_i), p(t_{i+1})))$ if and only if $p(l_i) \neq p(l_{i+1})$ in $\Gamma$.
\end{itemize}

Observe that $w''$ is well defined, as $\Gamma$ has no degree 2 vertices.
 
It is quite easy to convince oneself that for each pair of leaves in $\Gamma$, $l_i$ and $l_j$, $d_{\Gamma, w}(l_i,l_j)$ is exactly the same as $d_{\Pi, w''}(m_i, m_j)$ and that $d_{min}$ and  $d_{max}$ remain unchanged, so $G=PCG(\Gamma, w'', d_{min}, d_{max})$. 

\medskip

It remains to show that we can reassign the edge-weights of $\Pi$ in a way that any edge gets a positive weight and  $\Pi$ is the pairwise compatibility tree of $G$. To this purpose, we denote by $E(H)$ the edge set of any graph $H$, and we introduce the following two quantities:
$$
L=\min_{(u,v)\not\in
E(G)}\left\{| d_{min}-d_{\Pi,w''}(l_{u},l_{v}) |,| d_{max}-d_{\Pi,w''}(l_{u},l_{v}) |\right\},
\qquad N=|\left\{e:  e \in E(\Pi), w(e)=0 \right\}|,
$$

$L$ is the smallest distance between the quantities $d_{min},d_{max}$ and the weighted distances on the tree of the
paths  corresponding to non-edges of $G$; 
$N$ is the number of edges of $\Pi$ of weight $0$. 

Observe that, unless $G$ coincides with the clique $K_n$ (which trivially is PCG of the reduced centipede), there always exists a pair of leaves such that their distance on $\Pi$ falls out of the interval $[d_{min}, d_{max}]$ and hence $L>0$. 
Furthermore, as any edge incident to a leaf in $\Pi$ is strictly greater than $0$ and in view of the hypothesis that the caterpillar $\Gamma$ is not a reduced caterpillar, it holds 
$1 \leq N \leq n-3$ (the bound $n-3$ is reached when $\Gamma$ is a star).
So, the value $\epsilon=\frac{L}{N+1}$ is well defined.

Now define a new weight function $w'$ on $\Pi$ by assigning the weight $\epsilon$ to any edge of weight $0$. More formally, $w'(e)=w''(e)$ if $w''(e) \neq 0$ and $w'(e)=\epsilon$ otherwise. 
In this way the distance between any two leaves in $\Pi$ can result increased by a value upper bounded by $\epsilon N < L$. 

Set the new value $d'_{max}=d_{max}+\epsilon N$.

The following three observations conclude the proof:
\begin{itemize}
\item
any distance between leaves in $\Pi$ that was strictly smaller than $d_{min}$ with respect to the weight function $w''$ remains so after this transformation in view of the fact that $\epsilon N < L$;
\item
any distance that was strictly greater than $d_{max}$ with respect to the weight function $w''$ is strictly greater than $d'_{max}$ due to the definition of $L$;
\item
any distance that was in the interval  $[d_{min}, d_{max}]$ with respect to the weight function $w''$ is now in the interval  $[d_{min}, d'_{max}]$. \qed
\end{itemize}
\end{proof}

Observe that the previous statement suggests not to consider all kinds of caterpillars, but to restrict to reduced centipedes, only.
In the next section we exploit this result.

\begin{figure}[!ht]
\centering
\includegraphics[width = \textwidth]{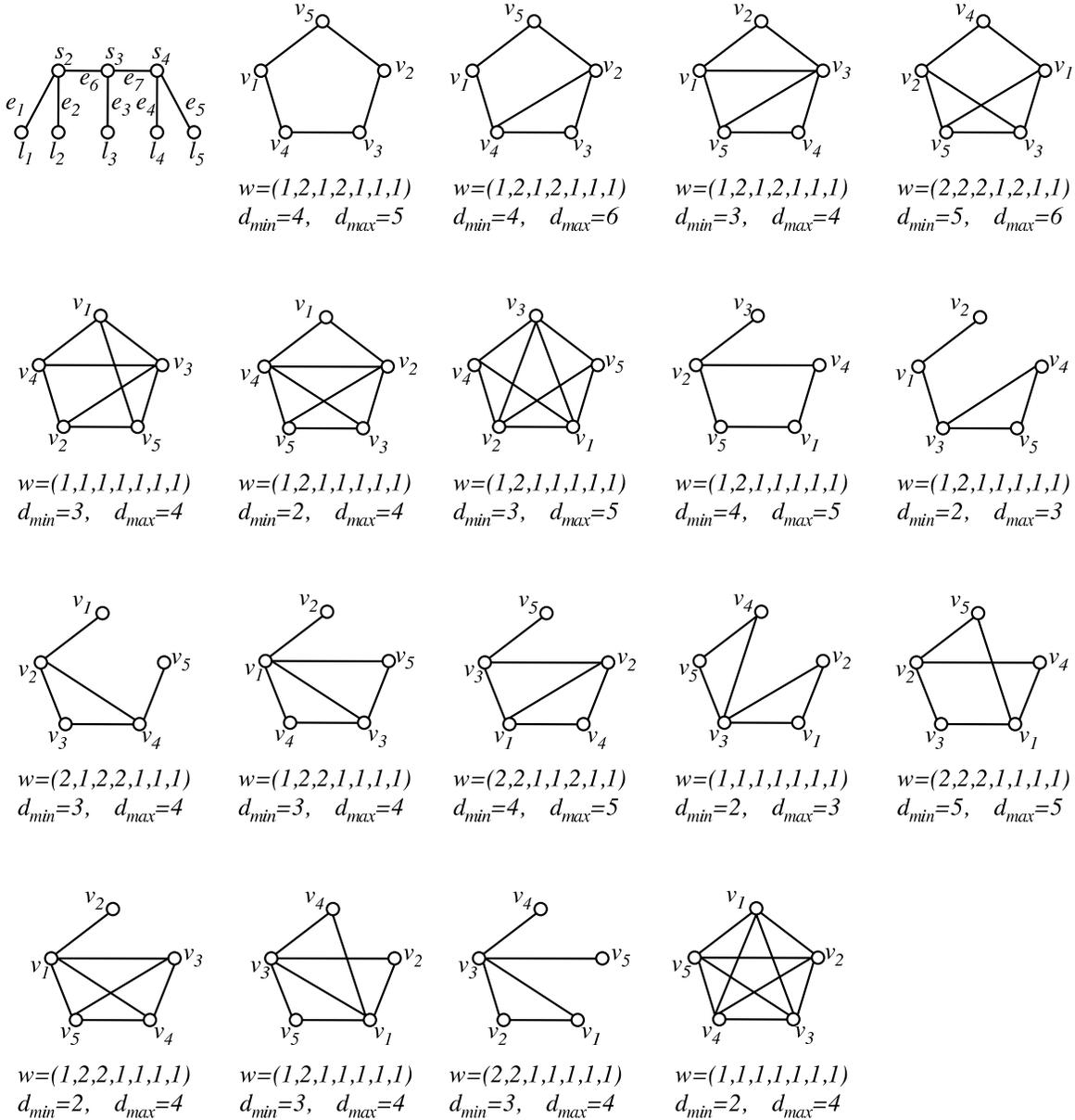}
\caption{All the non isomorphic connected cyclic graphs with 5 vertices with their representation as PCGs of the reduced centipede (top left).} \label{fig.5nodes}
\end{figure}

\section{Graphs on at most seven vertices}

In this section we show that all graphs with at most seven vertices, except for the wheel $W_7$, are PCGs of a  reduced centipede. 

Analogously to what we did in the proof of Theorem \ref{th.caterpillar}, name the leaves of $\Pi_n$ from left to right with $l_1, \ldots , l_n$ and the vertices of the spine from left to right with $s_2, \ldots s_{n-1}$.
As, for any $n$, there exists a unique unlabeled reduced centipede with $n$ leaves $\Pi_n$, in the following we consider the edges of $\Pi_n$ as ordered in the following way:
$e_1= (l_1, s_2)$; $e_i=(l_i, s_i)$ for each $2 \leq i \leq n-1$; $e_n=(l_n, s_{n-1})$; finally, $e_{n+i-1}=(s_i, s_{i+1})$ for each $2 \leq i \leq n-2$.

Now, the edge-weight function $w$ can be expressed as a $(2n-3)$ long vector $\vec{w}$, where the component $w_i$ is a positive integer representing the weight assigned to edge $e_i$.

In Figure \ref{fig.5nodes} all the 18 connected non isomorphic cyclic graphs with 5 vertices are depicted, together with the vector $\vec{w}$ and the values of $d_{min}$ and $d_{max}$ that witness that all of them are PCGs of $\Pi_5$.  Observe that the connected non isomorphic graphs on 5 vertices are 21, we have omitted the 3 graphs that are trees, which are trivially PCGs.
We remind that it is already proved in \cite{P02} that all the graphs with at most five vertices are PCG, but the provided trees were all different and not all caterpillars.

\medskip

For what concerns graphs with 6 and 7 vertices, except for the wheel $W_7$, we get a similar result. 
For the sake of brevity we do not depict all these graphs (there are 112 connected non isomorphic graphs with 6 vertices and 853 with 7 vertices), but the values of $\vec{w}$, $d_{min}$ and $d_{max}$ we got with the help of an enumerative C program can be found at the web page \url{https://sites.google.com/site/pcg6and7vertices/} .\\
Thus, we obtain the following result:

\begin{lemma}
\label{lemma:567}
All graphs with at most 7 vertices except for the wheel $W_7$ are PCGs of a reduced centipede.
\end{lemma}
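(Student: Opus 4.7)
The plan is to handle this by exhaustive case analysis, building on the reductions already established. By Theorem~\ref{th.caterpillar}, Proposition~\ref{prop.integer}, and Lemma~\ref{lemma.1}, any graph $G$ on $n$ vertices that is PCG of some caterpillar $\Gamma_n$ is also PCG of the reduced centipede $\Pi_n$ under positive integer edge weights with minimum weight equal to $1$ and integer thresholds $d_{min}, d_{max}$. Thus it suffices to exhibit, for each of the finitely many connected non-isomorphic graphs on $5$, $6$, and $7$ vertices (excluding $W_7$), a weight vector $\vec{w}$ and values $d_{min}, d_{max}$ witnessing that the graph is $PCG(\Pi_n, \vec{w}, d_{min}, d_{max})$.

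First, dispose of the case $n=5$ by direct inspection. Of the $21$ connected non-isomorphic graphs on $5$ vertices, the $3$ trees are trivially PCGs of $\Pi_5$, and the remaining $18$ cyclic graphs are listed in Figure~\ref{fig.5nodes} together with their witness parameters. Verification amounts to computing the $10$ pairwise leaf distances on $\Pi_5$ under the given weights and checking that exactly those pairs whose distance lies in $[d_{min},d_{max}]$ form the edge set of $G$.

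For $n\in\{6,7\}$, the approach is computational: enumerate all connected non-isomorphic graphs on $n$ vertices ($112$ for $n=6$ and $853$ for $n=7$) and, for each one, perform a bounded search over integer weight vectors $\vec{w}\in\{1,\ldots,W\}^{2n-3}$ with $\min_i w_i=1$ and over integer thresholds $0\le d_{min}\le d_{max}$. For each candidate tuple, compute the induced pairwise leaf distances on $\Pi_n$, form the candidate graph $H$ whose edges are those leaf pairs whose distance falls in $[d_{min},d_{max}]$, and test whether $H$ is isomorphic to $G$. A hit yields the desired witness; the full table obtained in this way is posted at the cited web page and the lemma follows by inspection of that table.

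The principal obstacle is the size of the search. Even after the normalization $\min_i w_i=1$, the space of candidate weight vectors grows as roughly $W^{2n-3}$, which is $W^{11}$ for $n=7$; embedding an isomorphism test in the inner loop makes naive enumeration expensive. Practical feasibility depends on pruning, for example fixing $d_{min}$ and $d_{max}$ before enumerating weights, exploiting the left-right symmetry of $\Pi_n$ to cut the search in half, and keeping $W$ small and only increasing it for graphs where no small-weight witness is found. Given the resulting table of witnesses, each one can be checked independently in polynomial time, so the correctness of the proof reduces to a mechanical verification of the published data, while $W_7$ is the single residual case left for the next section.
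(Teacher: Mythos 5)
Your proposal matches the paper's approach essentially exactly: the paper likewise handles the $5$-vertex graphs by exhibiting witnesses in Figure~\ref{fig.5nodes} and disposes of the $6$- and $7$-vertex cases by an enumerative C program whose output (weight vectors and thresholds for $\Pi_6$ and $\Pi_7$) is posted at the cited web page, with correctness resting on mechanical verification of that table. The extra detail you give on organizing and pruning the search goes beyond what the paper records, but it does not change the argument.
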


\begin{lemma}
\label{lemma:wheel}
The graph $W_7$ is a PCG.
\end{lemma}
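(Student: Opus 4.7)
The plan is a direct construction: I will exhibit a tree $T$, a positive edge-weight function $w$, and values $d_{min},d_{max}$ with $W_{7}=PCG(T,w,d_{min},d_{max})$. Since $W_{7}$ admits no such representation on any caterpillar (as recalled just above), $T$ must be a non-caterpillar. I take the simplest such tree on $7$ leaves with no degree-$2$ vertex: a \emph{tripod} whose internal structure is the claw $K_{1,3}$. Explicitly, a central node $r$ is joined to three internal nodes $u_{1},u_{2},u_{3}$ and to an additional leaf $c^{*}$, and each $u_{i}$ is joined to exactly two leaves. All internal nodes have degree at least $3$, and the pruned internal tree is a claw (not a path), so $T$ is indeed not a caterpillar.

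The labelling is where the idea lies. I assign the hub of $W_{7}$ to the leaf $c^{*}$, and distribute the six cycle-vertices among the three cherries so that each $u_{i}$ carries a pair of \emph{antipodal} vertices of the $6$-cycle, i.e.\ the three pairs are $(v_{1},v_{4})$, $(v_{2},v_{5})$, $(v_{3},v_{6})$. Within each pair I then designate one leaf as ``type A'' and the other as ``type B'' coherently: all odd-indexed $v_{i}$ are of type A, all even-indexed $v_{i}$ are of type B. Since this A/B labelling agrees with the bipartition of $C_{6}$, every cycle edge becomes an A--B pair of leaves, every distance-$2$ non-edge of the cycle becomes a same-type (A--A or B--B) pair of leaves, and every antipodal (distance-$3$) non-edge sits inside a single cherry.

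With symmetric weights $w(r,u_{i})=\beta$, $w(r,c^{*})=\alpha$, and $\gamma$ (resp.\ $\delta$) on the cherry-edge leading to the type A (resp.\ type B) leaf, only six path-weights can occur between leaves: $\gamma+\delta$, $2\beta+2\gamma$, $2\beta+2\delta$, $2\beta+\gamma+\delta$, $\alpha+\beta+\gamma$, and $\alpha+\beta+\delta$. By the previous paragraph, the last three correspond to the twelve edges of $W_{7}$ (six cycle edges and six hub edges) and the first three to its nine non-edges, so it suffices to choose $\alpha,\beta,\gamma,\delta$ placing the three ``edge'' values inside a common interval $[d_{min},d_{max}]$ missed by the three ``non-edge'' values. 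A quick arithmetic check (for instance $\beta=\gamma=1$, $\delta=\alpha=5$, $d_{min}=7$, $d_{max}=11$, yielding edge distances $8,7,11$ and non-edge distances $6,4,12$) confirms feasibility.

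The only real obstacle is to \emph{break the dihedral symmetry} of $C_{6}$ using a tree with just three branches. A fully symmetric tripod cannot distinguish cycle edges from distance-$2$ non-edges, since both connect two leaves sitting on distinct cherries and would yield equal path-weights. The trick is to push the antipodal non-edges \emph{inside} the cherries and then exploit the asymmetry $\gamma\neq\delta$ to turn the bipartition of $C_{6}$ into an A/B label that separates A--B (cycle edges) from A--A and B--B (distance-$2$ non-edges) by path-weight; once this is in place, the construction essentially verifies itself.
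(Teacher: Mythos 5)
Your proposal is correct and takes essentially the same approach as the paper: the paper's proof also just exhibits a non-caterpillar witness tree (a spider with three two-leaf cherries holding the antipodal pairs of the $6$-cycle, plus a separate leaf for the hub) together with explicit weights and thresholds. Your arithmetic checks out (edge distances $7,8,11$ inside $[7,11]$; non-edge distances $4,6,12$ outside), so the construction is a valid certificate.
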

\begin{proof}
Consider the edge-weighted tree $T$ depicted in Figure \ref{fig.wheel}.b and the two values $d_{min}=5$ and $d_{max}=7$.
It is immediate to see that $W_7=PCG(T, w, d_{min}, d_{max})$. \qed
\end{proof}

\begin{figure}[!ht]
\centering
\includegraphics[width = 0.7\textwidth]{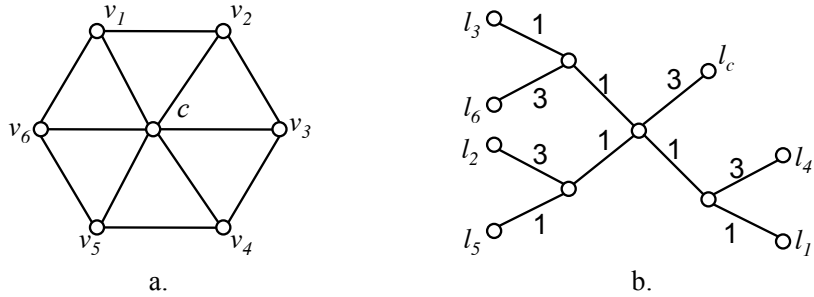}
\caption{(a.) The wheel $W_7$ and (b.) the edge-weighted tree $T$ such that $W_7=PCG(T, w, 5, 7)$.} 
\label{fig.wheel}
\end{figure}

This result is in agreement with the negative result in \cite{CFS}, stating that it is not possible to find any edge-weight function $w$ and any two values $d_{min}$ and $d_{max}$ such that $W_7=PCG(\Pi_7, w, d_{min}, d_{max})$.

\medskip

From Lemmas \ref{lemma:567} and \ref{lemma:wheel} it immediately derives the main result of this note:

\begin{theorem}
All graphs with at most 7 vertices are PCGs.
\end{theorem}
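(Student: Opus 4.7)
The plan is a trivial case split by the number of vertices, since the theorem is an immediate corollary of the two lemmas just proved together with classical results cited in the introduction. Given any graph $G$ with $|V(G)| \leq 7$, I would partition into three cases: (i) $|V(G)| \leq 5$, (ii) $|V(G)| \in \{6,7\}$ and $G \neq W_7$, and (iii) $G = W_7$.

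For case (i), I would appeal to the already-known fact (attributed to \cite{P02}) that all graphs on at most five vertices are PCGs; moreover, for the 18 connected cyclic graphs on five vertices, the paper has already exhibited a reduced-centipede witness in Figure \ref{fig.5nodes}, while the remaining graphs are trees or disconnected and can be handled either directly or via the cliques/disjoint-unions-of-cliques result of \cite{B}. For case (ii), I would invoke Lemma \ref{lemma:567}, which asserts that each such $G$ is the PCG of $\Pi_n$ with $n \in \{6,7\}$, the witness weight vector $\vec{w}$ and the thresholds $d_{\min}, d_{\max}$ being tabulated at the cited URL. For case (iii), I would invoke Lemma \ref{lemma:wheel}, whose explicit non-caterpillar witness tree is depicted in Figure \ref{fig.wheel}(b) with $d_{\min}=5$ and $d_{\max}=7$. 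Concatenating these three cases yields the theorem.

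There is essentially no obstacle at this final step; the substance of the argument has already been absorbed into Lemma \ref{lemma:567} (whose proof is computational and hinges on the enumerative search produced by the C program) and Lemma \ref{lemma:wheel} (which simply exhibits a witness). The only thing one must be slightly careful about is not forgetting the boundary cases: disconnected graphs on six or seven vertices, and graphs on fewer vertices, are not among the 112 and 853 connected graphs that are checked computationally, so I would mention explicitly that these are covered either by recursion on components (attaching component witness subtrees via a sufficiently heavy connecting edge so that all inter-component leaf distances exceed $d_{\max}$) or by the smaller-vertex cases. Once this is noted, the proof reduces to a single sentence: \emph{apply Lemmas \ref{lemma:567} and \ref{lemma:wheel}.}
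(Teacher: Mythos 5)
Your proposal is correct and matches the paper's proof, which is exactly the one-line deduction ``combine Lemma \ref{lemma:567} with Lemma \ref{lemma:wheel}.'' Your extra remark about disconnected graphs is a sensible precaution (the paper's enumeration explicitly counts only connected graphs), but it does not change the route: the argument is the same immediate case split.
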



\end{document}